\newcommand{\bra}[1]{\langle #1 \rvert}
\newcommand{\ket}[1]{\lvert #1 \rangle}
\newcommand{\tr}[1]{\textrm{Tr}\left \lbrace #1 \right \rbrace}
\newcommand{\im}[1]{\mathbf{Im} \left[ #1 \right]}
\newcommand{\re}[1]{\mathbf{Re} \left[ #1 \right]}
\newcommand{\jj}[1]{\mathbf{J} \left[ #1 \right]}
\newcommand{\onot}[1]{\mathcal{O} \left( #1 \right) }
\newcommand{\ft}[1]{\mathcal{F}_{t\rightarrow \omega}\left[#1\right]}
\newcommand{\ift}[1]{\mathcal{F}^{-1}_{\omega \rightarrow t}\left[#1\right]}
\newcommand{\esp}[1]{\mathbb{E}\left[#1\right]}
\newacronym{QPE}{QPE}{Quantum Phase Estimation}
\newacronym{GQPE}{GQPE}{Generalized Quantum Phase Estimation}
\newacronym{PRO}{PRO}{Path Response Operator}
\newacronym{QFT}{QFT}{Quantum Fourier Transform}
\newacronym{LCU}{LCU}{Linear Combination of Unitaries}
\newtheorem{theorem}{Theorem}
\def\Ddots{\mathinner{\mkern1mu\raise\p@
\vbox{\kern7\p@\hbox{.}}\mkern2mu
\raise4\p@\hbox{.}\mkern2mu\raise7\p@\hbox{.}\mkern1mu}}
\begin{document}
\title{Nonlinear Spectroscopy via Generalized Quantum Phase Estimation}

\newcommand{\be}{\mathbb{E}}

\author{Ignacio Loaiza}
\thanks{These authors contributed equally. \\
danial.motlagh@xanadu.ai\\
ignacio.loaiza@xanadu.ai}
\affiliation{Xanadu. Toronto, ON. M5G 2C8. Canada}
\author{Danial Motlagh}
\thanks{These authors contributed equally. \\ 
danial.motlagh@xanadu.ai\\
ignacio.loaiza@xanadu.ai}
\affiliation{Xanadu. Toronto, ON. M5G 2C8. Canada}
\author{Kasra Hejazi}
\affiliation{Xanadu. Toronto, ON. M5G 2C8. Canada}
\author{Modjtaba Shokrian Zini}
\affiliation{Xanadu. Toronto, ON. M5G 2C8. Canada}
\author{Alain Delgado}
\affiliation{Xanadu. Toronto, ON. M5G 2C8. Canada}
\author{Juan Miguel Arrazola}
\affiliation{Xanadu. Toronto, ON. M5G 2C8. Canada}

\date{\today}

\begin{abstract}
Response theory has a successful history of connecting experimental observations with theoretical predictions. Of particular interest is the optical response of matter, from which spectroscopy experiments can be modelled. However, the calculation of response properties for quantum systems is often prohibitively expensive, especially for nonlinear spectroscopy, as it requires access to either the time-evolution of the system or to excited states. In this work, we introduce a generalized quantum phase estimation framework designed for multi-variate phase estimation. This allows the treatment of general correlation functions enabling the recovery of response properties of arbitrary orders. The generalized quantum phase estimation circuit has an intuitive construction that is linked with a physical process of interest, and can directly sample frequencies from the distribution that would be obtained experimentally. In addition, we provide a single-ancilla modification of the new framework for early fault-tolerant quantum computers. Overall, our framework enables the efficient simulation of spectroscopy experiments beyond the linear regime, such as Raman spectroscopy, having that the circuit cost grows linearly with respect to the order of the target nonlinear response. This opens up an exciting new field of applications for quantum computers with potential technological impact.
\end{abstract}

\maketitle

\onecolumngrid

\glsresetall

\section{Introduction}
Response theory provides a systematic framework for calculations of field-induced properties of molecules and materials through changes in expectation values \cite{intro_to_response}. The most straightforward response functions to compute correspond to the linear response regime. Often expressed through Kubo formulas \cite{kubo_0,kubo_1,kubo_2}, linear response functions are associated with many properties of relevance in nuclear, chemical, and condensed-matter physics. Prime examples are linear absorption and emission spectroscopy \cite{mukamel,boyd}, first-order magnetic susceptibilities of materials \cite{magnetic,magnetic_1,magnetic_2}, thermal responses \cite{thermal_resp}, Hall conductivities \cite{hall}, and dielectric responses \cite{dielectric_protein}. \\

Despite being less explored due to increased computational cost, nonlinear response has a wide range of exciting applications. Nonlinear magnetic responses can be used as a sensitive probe to the structural and electronic properties of materials \cite{magnetic,magnetic_1,magnetic_2}, with direct applications to superconductivity \cite{magnetic_superconductivity} and topological materials \cite{magnetic_topology}. Nonlinear thermal responses can be linked to interesting topological properties of matter \cite{nonlinear_thermal} and thermal transport properties \cite{thermal_transport_1,thermal_transport_2,thermal_transport_3}. The recently discovered nonlinear Hall effect can be used to probe phase transitions in materials, modelling exotic electrical behaviours which appear when two alternating currents are applied \cite{nonlinear_hall_1,nonlinear_hall_2}. With a growing range of applications, the field of nonlinear response is an exciting area for technological innovation. Of particular interest are nonlinear optical responses, which give rise to a wide array of Raman spectroscopies and multiple harmonic generation \cite{mukamel,boyd,2D_spec,microsoft_n_spec}. Nonlinear spectroscopy has a rich history of experimental applications, and will be the main focus in this work. However, we note that the formalism presented in this work can be applied to any type of response. \\

While there is a large body of work dedicated to calculating response quantities on classical computers \cite{nielsen_and_chuang, shankar, susskind, jensen, mukamel, boyd}, such methods are often limited to small system sizes or require the use approximate methods, particularly for nonlinear responses which are harder to compute. This difficulty is due to the fact that unlike the ground state energy problem, response quantities require access either to excited states, which are often more challenging to prepare than the ground state, or to the time-evolution operator. While no general sub-exponential classical algorithms are known for Hamiltonian simulation, simulating time-evolution under a local Hamiltonian can be efficiently implemented on quantum computers \cite{Kassal_2008,qSimulation,simulation1,simulation2}. This offers a promising avenue to overcome the limitations of classical methods for theoretical spectroscopy.\\ 

Overall, little attention has been given to this topic within quantum computing when compared to ground-state calculations. Several works \cite{lr_on_qc,maskara2023programmable, 2D_spec,analog_vib,vibronic,reinholdt2024subspace,microsoft_n_spec,molecular_response,correlation2,lr_with_broadening,response_for_lr} exist on the topic of calculating response properties on quantum computers, with the majority being concerned with linear optical response. Many of the existing approaches calculate correlation functions in the time-domain and obtain the frequency-dependent properties via classical post-processing, which requires sampling of many different times. Recently a method was proposed in \cite{microsoft_n_spec} that uses quantum signal processing \cite{low2017optimal, motlagh2023generalized} for obtaining the frequency-dependent response of arbitrary order spectroscopies . However, their framework estimates one out of exponentially many frequencies at a time and does not allow for integration of general environmental effects, which often dramatically affect the response \cite{mukamel,franco_deco}.\\

In this work, we introduce a \gls{GQPE} framework as an extension of \gls{QPE} to multi-variate expectations such as those of higher-order correlation functions. \gls{GQPE} provides a general method for obtaining response functions of arbitrary order on a quantum computer such as those used to model nonlinear spectroscopy. Furthermore, \gls{GQPE} enables the incorporation of different spectral broadenings that appear due to the interaction of dissipative environments. The \gls{GQPE} circuit has an intuitive construction that is linked with the physical process of interest, and can directly sample frequencies from the distribution that would be obtained experimentally. Lastly, we extend Lin and Tong's technique \cite{lin_and_tong} for single-ancilla implementation of \gls{QPE} to our new framework to provide a more hardware-friendly implementation.\\

This work is organized as follows. \cref{sec:response} gives a review of linear and nonlinear response theory and discusses a few spectroscopic applications. In \cref{sec:GQPE} we extend the traditional \gls{QPE} framework by introducing \gls{GQPE} and discussing computation of response quantities as an example application. We then further generalize Lin and Tong's approach from \cite{lin_and_tong} to our new framework for a more hardware-friendly implementation of \gls{GQPE}.

\section{Theoretical background: response theory} \label{sec:response}
We start by introducing linear response and highlight its main components, namely correlation functions and spectral broadenings. The generalization to nonlinear processes is then presented.

\subsection{Linear response}
The objective of response theory is to compute how a quantum system described by the Hamiltonian $\hat H = \sum_{j} \lambda_j \ket{\lambda_j}\bra{\lambda_j}$ changes due to an external perturbation. More concretely, response theory offers a way to compute the variation of the expectation of an observable $\hat A$ when the system is perturbed with the operator $\hat V$. Here, we consider a scaled Hamiltonian such that its spectral norm $\|\hat H\| \leq \pi$. We consider the system to start at $t=-\infty$ in an equilibrium state $\hat \rho_0 = \sum_n \rho_n \ket{\lambda_n}\bra{\lambda_n}$ such that $[\hat H,\hat\rho_0]=0$. Examples of such a state would be the ground state, an excited state, or a thermal state. Given an envelope function $f$, the expectation value of $\hat A$ after time-evolution by the perturbed Hamiltonian $\hat{\mathcal{H}}(t')= \hat H + f(t')\hat V$ for time $t$ can be written as a series
\begin{equation} \label{eq:expectation_series}
    \langle \hat A \rangle_t = \sum_{D=0}^{\infty} \langle \hat A^{(D)} \rangle_t,
\end{equation}
where the zeroth-order contribution corresponds to the expectation value of $\hat A$ at $t=0$,
\begin{equation}
    \langle \hat A^{(0)} \rangle_t = \tr{e^{i\hat H t}\hat A e^{-i\hat H t} \hat \rho_0} = \tr{\hat A \hat\rho_0},
\end{equation}
and the first-order correction is written as
\begin{equation}
\langle \hat A^{(1)}\rangle_t = i\int_{0}^\infty dt_1 f(t-t_1) \tr{[\hat A_I(t_1), \hat V]\hat\rho_0}.
\end{equation}
This corresponds to a convolution between the envelope function $f$ and the linear response function
\begin{align}
    \chi^{(1)}_{AV}(t) &= \theta(t)\tr{[\hat A_I(t),\hat V]\hat\rho_0}, \label{eq:linear_response_time}
\end{align}
where $\theta(t)$ is the Heaviside step function and $\hat A_I(t)= e^{i\hat H t}\hat A e^{-i\hat H t}$ is the operator $\hat A$ in the interaction picture representation. Through the use of the convolution theorem, first-order contribution to the expectation value can then be calculated as
\begin{equation} \label{eq:first_correction}
    \langle \hat A^{(1)} \rangle_t = \frac{i}{2\pi}\int_{-\infty}^\infty d\omega \chi_{AV}^{(1)}\left(\omega\right) F(\omega) e^{it\omega},
\end{equation}
where we have defined the Fourier transform of the envelope function $F(\omega)=\ft{f(t)}$. We can expand the frequency-dependent linear response as
\begin{align}
    \chi_{AV}^{(1)}(\omega) &= \ft{\chi_{AV}^{(1)}(t)} \\
    &= \Theta(\omega) * \left(C_{AV}(\omega) + C_{AV}^*(-\omega) \right), \label{eq:linear_response_conv}
\end{align}
where $*$ corresponds to a convolution, $\Theta(\omega) = \ft{\theta(t)}$, and the correlation function
\begin{align}
C_{AV}(\omega) &= \ft{\tr{\hat A_I(t)\hat V \hat\rho_0}} \\
&= 2\pi \sum_{n_{0}n_{1}} \rho_{n_0} A_{n_{0}n_{1}}V_{n_{1}n_{0}} \delta(\Delta_{10} - \omega).
\end{align}
Here we have defined the matrix elements of operators $A_{n_{0}n_{1}}= \bra{\lambda_{n_0}}\hat A\ket{\lambda_{n_1}}$, and the energy differences $\Delta_{10}= \lambda_{n_0} - \lambda_{n_1}$. The Heaviside function enforces causality and is related to the Kramers-Kronig relations \cite{kk1,kk2,mukamel,boyd}. A more in depth discussion of this point appears in \cref{app:lineshapes}. \\

In physical systems, dissipative mechanisms often lead to a spectral broadening. A large class of dissipative effects can be added to the above formalism by making the modification $e^{\pm i\hat H t}\rightarrow e^{\pm i\hat H t}e^{-\hat\Gamma(t)}$ where $[\hat H , \hat\Gamma(t)] = 0$. This enforces a dissipation rate $\gamma_n(t)$ at time $t$ for each eigenstate $\ket{\lambda_n}$. In the case of a scalar uniform broadening, we have $\hat\Gamma(t) = \gamma(t)$, which corresponds to having the same rate for all states. Within our framework, this corresponds to replacing $\Theta(\omega)$ by the line shape $\mathcal{L}(\omega) = \ft{\theta(t)e^{-\gamma(t)}}$ in \cref{eq:linear_response_conv}, obtaining the broadened linear response
\begin{equation}
    \chi_{AV}^{(1)}(\omega) = \mathcal{L}(\omega) * \left(C_{AV}(\omega) + C_{AV}^*(-\omega) \right).\label{eq:broadened_lr}
\end{equation}
Lorentzian, Gaussian, and Voigt functions are commonly used to broaden the spectral lines. In general, arbitrary environmental interactions could be incorporated here by replacing the unitary Hamiltonian-based evolution by a non-unitary Liouvillian evolution that models the desired environmental effects. Several approaches have been proposed for simulating these non-unitary dynamics on a quantum computer \cite{open1,open2,open3,open4,open5,open6}. For simplicity, for the remainder of this work we only consider broadening effects that can be added analytically to the unitary Hamiltonian simulation. A more in-depth discussion of environmental interactions and spectral broadenings is presented in \cref{app:lineshapes}. \\

Upon examination of \cref{eq:broadened_lr} we can see that the response function is composed of two parts: a line shape function $\mathcal{L}(\omega)$, which is convoluted with a sum of correlation functions. Each one of these correlation functions can be associated to a different path taken by the density matrix in Liouville space and a different physical process, which is often represented by a double-sided Feynman diagram \cite{mukamel}. This structure will also be present in nonlinear responses, and is instrumental for the quantum algorithm presented in the next section. \\

A key application of linear response theory is the calculation of the absorption spectrum of quantum systems within the dipole approximation. The absorption spectrum is given by the broadened frequency-domain correlation function between the dipole operator $\hat \mu$ and itself ($\hat A = \hat V = \hat\mu$), often referred to as the auto-correlation
\begin{equation}
    \chi_{\mu\mu}^{(1)}(\omega) = \sum_{n_0n_1} \rho_{n_0}|\mu_{n_1n_0}|^2 \,\mathcal{L}(\Delta_{10}-\omega).
\end{equation}

\subsection{Nonlinear response}
Having shown how to calculate linear responses, we now present the extension to nonlinear responses. In analogy with Eq.~\eqref{eq:first_correction}, we can write the $D^{th}$-order correction to the expectation value
\begin{equation}
    \langle \hat A^{(D)} \rangle_t = i^D \int_{-\infty}^\infty dt_1 \int_{-\infty}^\infty dt_2 ... \int_{-\infty}^\infty dt_D \, \chi_{AV}^{(D)}(t_1,\cdots,t_D)\prod_{j=1}^D f\left(t-\sum_{k=0}^{j-1} t_{D-k}\right),
\end{equation}
which corresponds to a total of $D$ convolutions. We have defined the time-dependent nonlinear response 
\begin{align}
    \chi_{AV}^{(D)}(t_1,...,t_D) = \left(\prod_{j=1}^D \theta(t_j)\right) \,\tr{\left[...\left[\left[\hat A_I\left(\sum_{i=1}^D t_i\right), \hat V_I\left(\sum_{i=1}^{D-1} t_i\right)\right], \hat V_I\left(\sum_{i=1}^{D-2} t_i\right)\right],..., \hat V_I\left(0\right)\right], \hat\rho_0}. \label{eq:commutator_response}
\end{align}
The trace over the nested commutators will give a sum of correlation functions for different permutations of $\hat A_I$ and $\hat V_I$'s with their associated times. The associated multi-dimensional Fourier conjugate then becomes
\begin{equation}
    \chi_{AV}^{(D)}\left(\omega_1, \omega_1+\omega_2,...,\omega_1+...+\omega_D\right) = \int_{-\infty}^{\infty} dt_1 e^{-i\omega_1t_1} \int_{-\infty}^{\infty} dt_2 e^{-i\omega_2t_2} ... \int_{-\infty}^{\infty} dt_D e^{-i\omega_Dt_D} \chi_{AV}^{(D)}(t_1,...,t_D).
\end{equation}
Separating each of the contributions from \cref{eq:commutator_response} we can express the nonlinear response as
\begin{equation}
   \chi_{AV}^{(D)}\left(\vec\omega\right) = \sum_{\alpha=1}^{2^D} R_{AV}^{(D,\alpha)}(\vec\omega),
\end{equation}
where each of the $\alpha=1,...,2^D$ contributions coming from the $D$ commutators can be associated to a different path in Liouville space, which yields a $R^{(D,\alpha)}_{AV}(\vec\omega)$ path response. Note that paths will always appear in pairs with their complex conjugates, having an interference for which only the real/imaginary part survives for $D$ odd/even. Through the use of the convolution theorem, the $D^{th}$-order contribution to the expectation value can then be calculated as
\begin{equation}
    \langle \hat A^{(D)} \rangle_t = \left(\frac{i}{2\pi}\right)^D\int_{-\infty}^\infty d\omega_1\int_{-\infty}^\infty d\omega_2...\int_{-\infty}^\infty d\omega_D \chi_{AV}^{(D)}\left(\omega_1, \omega_1+\omega_2,...,\omega_1+...+\omega_D\right) \prod_{j=1}^D F(\omega_j)\,e^{it\omega_j}.
\end{equation}
We now explicitly write the path response corresponding to the first term where all commutators were taken with the positive sign:
\begin{align}
    R_{AV}^{(D,1)}(\vec\omega) &= \int_{-\infty}^\infty dt_1 \int_{-\infty}^\infty dt_2 ... \int_{-\infty}^\infty dt_D e^{-i\vec\omega\cdot\vec t} \theta( t_1)...\theta( t_D) \textrm{Tr}\Bigg\{e^{i\hat H ( t_1+...+ t_D)} \hat A e^{-i\hat H  t_D} \hat V e^{-i\hat H  t_{D-1}}\cdots\hat V e^{-i\hat H  t_1} \hat V \hat\rho_0\Bigg\} \nonumber\\
    &= \sum_{n_0,n_1,...,n_D} \rho_{n_0}A_{n_0n_1} V_{n_1 n_2} ... V_{n_{D-1} n_D} V_{n_D n_0} \Theta(\Delta_{10}-\omega_1) \Theta(\Delta_{20}-\omega_2) ... \Theta(\Delta_{D0}-\omega_D).
\end{align}
All other paths will have a completely analogous treatment, while having a permutation of the $\hat A_I$ and $\hat V_I$ operators. Including the line shape into each path response, we can write
\begin{equation} \label{eq:path_as_convolution}
    R_{AV}^{(D,\alpha)}(\vec\omega) = \left(\prod_{i=1}^D \mathcal{L}(\omega_i)\right) *^D C^{(D,\alpha)}(\vec\omega),
\end{equation}
where $*^D$ is the convolution over each frequency $\omega_1,\cdots,\omega_D$, and $C^{(D,\alpha)}(\vec\omega)$ is the corresponding correlation function for the path $\alpha$. We find $C^{(D,\alpha)}(\vec\omega)$ by writing the associated expectation value, e.g. for $D=3$ and a particular path $\alpha_0$:
\begin{equation}
    C^{(3,\alpha_0)}(\vec t) = \tr{\hat V_I(0)\hat V_I( t_1)\hat A_I( t_1+ t_2+ t_3) \hat V_I( t_1+ t_2) \hat\rho_0},
\end{equation} 
after which we expand the time-evolutions and cancel trivial terms
\begin{align}
    C^{(3,\alpha_0)}(\vec t) &= \tr{\hat V e^{i\hat H  t_1} \hat V e^{-i\hat H  t_1}e^{i\hat H ( t_1+ t_2+ t_3)}\hat A e^{-i\hat H ( t_1+ t_2+ t_3)} e^{i\hat H ( t_1+ t_2)} \hat V e^{-i\hat H ( t_1+ t_2)} \hat\rho_0} \\
&= \tr{\hat V e^{i\hat H  t_1} \hat V e^{i\hat H ( t_2+ t_3)}\hat A e^{-i\hat H  t_3} \hat V e^{-i\hat H ( t_1+ t_2)} \hat\rho_0}. \label{eq:to_be_referenced}
\end{align}
As an application example from nonlinear spectroscopy we consider a resonant Raman process, for which only a single pair of conjugate paths in Liouville space has a significant contribution. Considering an incoming/scattered photon with frequency $\omega_I/\omega_S$, the resonant Raman cross section can be expressed through the Kramers-Heisenberg formula \cite{kk2}:
\begin{align}
    \frac{d^2\sigma(\omega_I,\omega_S)}{d\omega_Id\Omega} &\propto \omega_I \omega_S^3 \chi_{\mu\mu}^{(3)}(\omega_I,\omega_S-\omega_I,\omega_I) \\
    \chi_{\mu\mu}^{(3)}(\omega_I,\omega_S-\omega_I,\omega_I) &= \operatorname{Re} \left[\sum_{n_0,n_1,n_2,n_3} \mu_{n_0n_1}\mu_{n_1n_2}\mu_{n_2n_3}\mu_{n_3n_0} \mathcal{L}_{\eta_{int}}(\Delta_{10}-\omega_I)\mathcal{L}^*_{\eta_{int}}(\Delta_{30}-\omega_I)\mathcal{L}_{\eta_f}(\Delta_{20}-\omega_I+\omega_S) \right]. \label{eq:raman}
\end{align}
For simplicity we assumed real dipole matrix elements. Note how different broadenings appear: this is due to the energy-dependent dissipation $\hat\Gamma(t)$. The resonant nature of this process makes intermediate states localized in energies, meaning the dissipation at each different stage can be approximated by an associated constant rate, resulting in the broadenings $\eta_{int}$ and $\eta_f$.

\section{Generalized quantum phase estimation} \label{sec:GQPE}

In this section we introduce a generalization of the \gls{QPE} algorithm and look at nonlinear response calculations as an example application. We first extend the scope of traditional \gls{QPE} to include not only phase estimation of an evolution operator $e^{i\hat H}$ but also phase estimation on interaction picture operators $\hat A_I(t)\hat B = e^{i\hat H t} \hat A e^{-i\hat H t}B$. We then further generalize the framework to perform multi-variate phase estimation on a multi-variable operator such as $\hat A_I(t_1)\hat B_I(t_2)\hat C_I(t_3)\hat D$. Given an initial eigenstate of the system $\ket{\lambda_{n_0}}$, the goal of QPE can be interpreted as estimating the function
\begin{align}
    C(\omega) &= \frac{1}{2\pi}\int_{-\infty}^\infty dt\, e^{-i\omega t}\bra{\lambda_{n_0}}e^{i\hat Ht}\ket{\lambda_{n_0}}\nonumber\\
    &= \delta(\lambda_{n_0}-\omega).
\end{align}
We can similarly extend the scope to interaction picture operators
\begin{align}
    C(\omega) &= \frac{1}{2\pi}\int_{-\infty}^\infty dt\, e^{-i\omega t}\bra{\lambda_{n_0}}e^{i\hat H t} \hat A e^{-i\hat H t}\hat B\ket{\lambda_{n_0}}\nonumber\\
    &= \sum_{n_1} A_{n_0n_1}B_{n_1n_0}\, \delta(\Delta_{10}-\omega).
\end{align}
Here we have written $A_{n_0n_1} = \bra{\lambda_{n_0}} \hat A\ket{\lambda_{n_1}}$, and $\Delta_{10} = \lambda_{n_0} - \lambda_{n_1}$. Given a finite energy resolution $\eta\in(0,1)$, we relax the problem by replacing $\delta(\omega)$ with a window function $\mathcal{L}_\eta(\omega)$ represented as a truncated Fourier series $\mathcal{L}_{\eta}(\omega) = \frac{1}{\sqrt{N}}\sum_{k=0}^{N-1}\alpha_k e^{ik\omega}$ with $\| \vec{\alpha}\|^2 = 1$, such that for $\omega\in[-\eta, \eta]$, $|\mathcal{L}_\eta(\omega)|^2\in \mathcal{O}(1)$. A typical example of this is the sinc function where all $\alpha_k = \frac{1}{\sqrt{N}}$ for $N\in \mathcal{O}(1/\eta)$, prepared by applying a Hadamard on each of the wires in the time register.\\

\begin{figure}[t!]
\centering
\begin{minipage}{1\textwidth}
\[
\Qcircuit @C=1.5em @R=1em {
     & \gate{U_{\mathcal{L}}} &  \qw \barrier{4}& \qw &\gate{} \qwx[1]  & \qw \barrier{4} &\qw &\gate{\text{QFT}^\dag} & \qw \\
     & \gate{U_{\mathcal{L}}} & \qw &\qw &\gate{} \qwx[2]  & \qw &\qw& \gate{\text{QFT}^\dag} & \qw \\
    \lstick{\hat U_{R(\vec\omega)}:=}   & \vdots                      & & &     &  & & \vdots                    &     \\
     & \gate{U_{\mathcal{L}}} & \qw &\qw & \gate{} \qwx[1] & \qw &\qw&\gate{\text{QFT}^\dag} & \qw \\
     & \gate{U_{\lambda_{n_0}}}  & \qw & \qw& \gate{\hat V_I^{(0)}(t_D)\hat V_I^{(1)}(t_{D-1})\cdots \hat V_I^{(D-1)}(t_1) \hat V^{(D)}} & \qw &\qw&\gate{U_{\lambda_{n_0}}^\dag} & \qw
}
\]
(a)

\[
\underbrace{\Qcircuit @C=0.5em @R=1em {
    & \qw & \qw & \qw& \qw & \qw & \qw & \qw& \qw & \qw & \qw & \qw& \qw &  \ctrl{9} & \qw & \qw & \qw & \qw & \qw & \qw & \qw & \qw & \qw & \ctrl{9} & \qw \\
    & \qw  & \qw & \qw& \qw & \qw & \qw & \qw& \qw & \qw & \qw & \qw& \qw &  \qw & \ctrl{8} & \qw & \qw & \qw & \qw & \qw & \qw & \qw & \ctrl{8} &\qw &  \qw\\
    & \vdots &  &  &  & &&  &  & &  &  & &  &  &  & \ddots & &  & &  & \Ddots \\
    & \qw & \qw  & \qw& \qw & \qw & \qw & \qw& \qw & \qw & \qw & \qw& \qw & \qw & \qw & \qw  & \qw & \ctrl{6} & \qw & \ctrl{6}  & \qw & \qw & \qw & \qw  & \qw
    \inputgroupv{1}{4}{1em}{1.5em}{\ket{t_2}}\\
    &  &  &  & &  &  & &  &  & & &  & &  & &  &  & \\
    & \qw &  \ctrl{4} & \qw & \qw & \qw & \qw & \qw & \qw & \qw & \qw & \qw & \ctrl{4} & \qw & \qw & \qw & \qw & \qw & \qw & \qw & \qw & \qw & \qw & \qw & \qw \\
    & \qw &  \qw & \ctrl{3} & \qw & \qw & \qw & \qw & \qw & \qw & \qw & \ctrl{3} &\qw &  \qw & \qw & \qw & \qw & \qw & \qw & \qw & \qw & \qw & \qw & \qw & \qw\\
    &  \vdots &  &  & \ddots & &  & &  & \Ddots \\
    & \qw & \qw & \qw & \qw  & \qw & \ctrl{1} & \qw & \ctrl{1}  & \qw & \qw & \qw & \qw   & \qw & \qw & \qw & \qw & \qw & \qw & \qw & \qw & \qw & \qw & \qw & \qw
    \inputgroupv{6}{9}{1em}{1.5em}{\ket{t_1}}\\
    \lstick{\ket{\lambda_{n_0}}}& \gate{V^{(2)}} & \gate{U} & \gate{U^2} & \cdots & & \gate{U^{\frac{N}{2}}} & \gate{V^{(1)}} & \gate{U^{\frac{-N}{2}}} & \cdots & &  \gate{U^{-2}} & \gate{U^{-1}}  & \gate{U} & \gate{U^2} & \cdots & & \gate{U^{\frac{N}{2}}} & \gate{V^{(0)}} & \gate{U^{\frac{-N}{2}}} & \cdots & &  \gate{U^{-2}} & \gate{U^{-1}}\\
    &
}}_{\hspace{-1cm}\Qcircuit @C=2.5em @R=1em {
    &\\
     &\lstick{\ket{t_2}} &\gate{} \qwx[1] & \qw \\
     &\lstick{\ket{t_1}} &\gate{} \qwx[1] & \qw \\
     &\lstick{\ket{\lambda_{n_0}}} & \gate{\hat V^{(0)}_I(t_2)\hat V^{(1)}_I(t_1)\hat V^{(2)}} & \qw
}}
\]
(b)
\caption{(a) Quantum circuit for \gls{GQPE} on the multi-variate operator $\hat V^{(0)}_I(t_D)\hat V^{(1)}_I(t_{D-1})\cdots \hat V^{(D-1)}_I(t_1) \hat V^{(D)}$. It starts by preparing the system register $\ket{\lambda_{n_0}}=\hat U_{\lambda_{n_0}}\ket{0}$ and each of the time registers $\sum_{k=0}^{N-1} \alpha_k \ket{k} = \hat U_{\mathcal{L}}\ket{0}$ for a window function $\mathcal{L}(\omega) = \frac{1}{\sqrt{N}}\sum_{k=0}^{N-1}\alpha_k e^{ik\omega}$. Next, we apply the multiplexed time correlation operator between operators $\hat V^{(0)},\hat V^{(1)},\cdots \hat V^{(D)}$ followed by a Fourier transform of the time registers and unpreparing of the system register. (b) Expanded circuit implementation of a two-variable operator $\hat V^{(0)}_I(t_2)\hat V^{(1)}_I(t_1)\hat V^{(2)} = e^{i\hat Ht_2}\hat V^{(0)}e^{-i\hat Ht_2}e^{i\hat Ht_1}\hat V^{(1)}e^{-i\hat Ht_1}\hat V^{(2)}$, where we have written $\hat U = e^{-i\hat H}$. Since $\hat V^{(i)}$s are generally not unitary, their action corresponds to applying a block-encoding of these operators.}
\vspace{-0.3cm}
\label{fig:corr_func}
\end{minipage}
\end{figure}

The extension to a multi-variable quantum phase estimation is similar. We consider some $D$-variable operator $\hat V^{(0)}_I(t_D)\hat V^{(1)}_I(t_{D-1})\cdots \hat V^{(D-1)}_I(t_1) \hat V^{(D)}$ and write
\begin{align}
    C(\vec\omega) &= \frac{1}{(2\pi)^D}\int_{-\infty}^\infty dt_1\cdots \int_{-\infty}^\infty dt_D\, e^{-i\vec\omega\cdot \vec t}\bra{\lambda_{n_0}}\hat V^{(0)}_I(t_D)\hat V^{(1)}_I(t_{D-1})\cdots \hat V^{(D-1)}_I(t_1) \hat V^{(D)}\ket{\lambda_{n_0}}\nonumber\\
    &= \sum_{n_1,...,n_D} V^{(D)}_{n_Dn_0}\prod_{j=1}^{D} V^{(j-1)}_{n_{j-1}n_{j}}\, \delta(\Delta_{(j-1)j}-\omega_{D-j+1}).
\end{align}
Given a finite energy resolution $\eta\in(0,1)$, we can again relax the problem by replacing $\delta(\omega)$ with a window function $\mathcal{L}_\eta(\omega)$ represented as a truncated Fourier series $\mathcal{L}_{\eta}(\omega) = \frac{1}{\sqrt{N}}\sum_{k=0}^{N-1}\alpha_k e^{ik\omega}$ with $\| \vec{\alpha}\|^2 = 1$, obtaining the broadened correlation function
\begin{align}\label{eq:quant_of_interest}
    R(\vec\omega) = \sum_{n_1,...,n_D} V^{(D)}_{n_Dn_0}\prod_{j=1}^{D} V^{(j-1)}_{n_{j-1}n_{j}}\, \mathcal{L}_\eta(\Delta_{(j-1)j}-\omega_{D-j+1}).
\end{align}
However, note that in order to be able to estimate the multi-variate correlation function at any given $\vec{\omega}\in[-\eta, \eta]^D$ with accuracy $\epsilon$, we need $\prod_{j=1}^d |\mathcal{L}_{\eta}(\omega_j)|^2 \in 1-\mathcal{O}(\epsilon)$. This means that for each $\omega\in[-\eta, \eta]$ component we need
\begin{equation} \label{eq:window_requirement}
    |\mathcal{L}_{\eta}(\omega)|^2 \sim 1-\mathcal{O}\left(\frac{\epsilon}{D}\right),
\end{equation}
which can be done most efficiently using a Kaiser window \cite{berry2024analyzing,vs_qsvt}, leading to $N \in \mathcal{O}(\log(D/\epsilon)/\eta)$. The circuit implementation of GQPE is depicted in \cref{fig:corr_func}. We describe the algorithm in full detail below.\\

\newpage
\noindent\textbf{Generalized quantum phase estimation circuit}

\begin{enumerate}
    \item Allocate a register for the system as well as $D$ extra $(\log N)$-qubit time registers.
    \item Prepare the system register $\ket{\lambda_{n_0}}=\hat U_{\lambda_{n_0}}\ket{0}$ and each of the time registers $\sum_{k=0}^{N-1} \alpha_k \ket{k} = \hat U_{\mathcal{L}}\ket{0}$ for a window function $\mathcal{L}(\omega) = \frac{1}{\sqrt{N}}\sum_{k=0}^{N-1}\alpha_k e^{ik\omega}$. We refer the reader to \cite{ini_state, dynamical_cooling} for a more in-depth discussions on initial state preparation.
    \item Apply the multiplexed $D$-variable operator of interest to the combined space of all registers (e.g. $\hat A_I(t_2)\hat B_I(t_1)\hat C$ as shown in \cref{fig:corr_func}).
    \item Apply $\text{QFT}^\dag$ to each time register and $\hat U_{\lambda_{n_0}}^\dag$ to the system register.
\end{enumerate}
It results in the unitary $\hat U_{R(\vec\omega)}$ which encodes \cref{eq:quant_of_interest} (see \cref{app:proof}).
\begin{align}\label{eq:main_unitary}
    R\left(\frac{\vec \omega \cdot 2\pi}{N}\right)  = \bra{\vec \omega, 0}\hat U_{R(\vec\omega)}\ket{0, 0}.
\end{align}
We now show two ways to use this circuit to estimate $R\left(\vec \omega\right)$.\\ 

\noindent\textbf{Estimation protocols}

\begin{enumerate}
    \item Utilize $\hat U_{\vec{\omega}} \ket{\Vec{\omega}, 0} = \ket{0}$ consisting of single qubit $X$ gates to re-write \cref{eq:main_unitary} as an expectation value $\bra{0}\hat U_{\vec{\omega}}\hat U_{R(\vec\omega)}\ket{0}$ and estimate $R\left(\vec \omega\right)$ up to accuracy $\epsilon$ for a particular $\vec{\omega}$ with an overhead of $\mathcal{O}(1/\epsilon)$ via amplitude estimation \cite{qae}.
    \item Model each frequency $\vec{\omega}$ as a Bernoulli variable with probability of success given by $\left|R\left(\vec \omega\right)\right|^2$ and sample the circuit in \cref{fig:corr_func}, counting each sampled $\vec{\omega}$ not only as a success for its own Bernoulli variable but also as a failure for all other frequencies. This allows us to estimate $\left|R\left(\vec \omega\right)\right|^2$ up to accuracy $\epsilon^2$ and thus $\left|R\left(\vec \omega\right)\right|$ up to accuracy $\epsilon$ at all points with an overhead of $\mathcal{O}(1/\epsilon^4)$.
\end{enumerate}
Note that for the second approach, in most cases it suffices to estimate $\left|R\left(\vec \omega\right)\right|$ at low accuracy to find peaks in the distribution to be further resolved by approach one. However, it is also possible to estimate $R\left(\vec \omega\right)$ directly via the second approach by separating the real and imaginary parts of the function as shown in \cref{app:proof}. We now describe the complexity of the algorithm.
\begin{theorem}[Complexity] \label{theo:freq_comp}
    Given a system's time-evolution operator $e^{i\hat H}$ and set of unitaries $\{\hat U_{V^{(j)}}\}_{j=0}^D$ that block-encode operators $\hat V^{(j)}/|\hat V^{(j)}|_1$,  we can estimate $R\left(\vec \omega\right)$ from \cref{eq:quant_of_interest} with energy resolution $\eta \in (0,1)$ up to accuracy $\epsilon \in (0,1)$ by either of the following:
    \begin{enumerate}
        \item At a particular $\vec{\omega}$ using $\tilde{\mathcal{O}}\left(\frac{D\cdot \Omega}{\eta\cdot \epsilon}\right)$ queries to $e^{\pm i\hat H}$, plus $\mathcal{O}\left(\frac{\Omega}{\epsilon}\right)$ queries to each block-encoding unitary in $\{\hat U_{V^{(j)}}\}_{j=0}^D$ as well as $\hat U_{\lambda_n}$ for a total of $\mathcal{O}(1)$ samples. \\ \\
        In total, this procedure requires $\tilde{\mathcal{O}}\left(\frac{D\cdot \Omega}{\eta\cdot \epsilon}\right)$ queries to the time-evolution oracles and $\mathcal{O}\left(\frac{\Omega}{\epsilon}\right)$ queries to the block-encoding and initial state preparation oracles.
        \item At every point $\vec{\omega}$ simultaneously. Each circuit uses $\tilde{\mathcal{O}}\left(\frac{D}{\eta}\right)$ queries to $e^{\pm i\hat H}$, plus $\mathcal{O}(1)$ queries to each block-encoding unitary as well as $\hat U_{\lambda_n}$. The associated sample complexity is $\mathcal{O}\left(\frac{\Omega^4}{\epsilon^4}\right)$. \\ \\
        In total, this procedure requires $\tilde{\mathcal{O}}\left(\frac{D\cdot\Omega^4}{\eta\cdot \epsilon^4}\right)$ calls to the time-evolution oracles alongside $\mathcal{O}\left(\frac{\Omega^4}{\epsilon^4}\right)$ calls to the block-encoding and initial state preparation oracles. 
    \end{enumerate}
    Here we have defined $\Omega = \prod_{j=0}^D |\hat V^{(j)}|_1$ as the product of 1-norms which appears when normalizing the associated block-encodings.
\end{theorem}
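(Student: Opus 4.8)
\section*{Proof proposal}

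The plan is to treat the stated complexities as a bookkeeping exercise layered on top of the encoding identity \cref{eq:main_unitary}, which I take from \cref{app:proof}, together with the Kaiser-window resolution estimate $N\in\mathcal{O}(\log(D/\epsilon)/\eta)$ of \cref{eq:window_requirement}. The first thing I would pin down is the subnormalization. Since the oracles $\hat U_{V^{(j)}}$ block-encode $\hat V^{(j)}/|\hat V^{(j)}|_1$ rather than $\hat V^{(j)}$ itself, substituting them into the circuit of \cref{fig:corr_func} and post-selecting the block-encoding ancillas onto $\ket{0}$ realizes the operator $\big(\prod_{j=0}^D \hat V^{(j)}\big)/\Omega$ in place of the bare product, so the amplitude actually produced by the circuit is $R(\vec\omega)/\Omega$ rather than $R(\vec\omega)$. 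Carrying this factor of $1/\Omega$ through the readout is what converts every ``accuracy $\epsilon$ on $R$'' requirement into an ``accuracy $\epsilon/\Omega$ on the amplitude'' requirement, and is the source of every power of $\Omega$ in the final counts.

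Next I would count the cost of a single invocation of $\hat U_{R(\vec\omega)}$. Each of the $D$ time registers carries $\log N$ qubits, and the controlled interaction-picture evolution $e^{\pm i\hat H t_j}$ is synthesized by the binary decomposition $e^{-i\hat H t}=\prod_\ell (e^{-i\hat H 2^\ell})^{b_\ell}$, whose controlled powers cost $\sum_{\ell=0}^{\log N -1} 2^\ell = \mathcal{O}(N)$ queries to $e^{\pm i\hat H}$. There are two such evolutions (forward and backward about each inserted operator) per register, giving $\mathcal{O}(DN)$ time-evolution queries, while each block-encoding oracle $\hat U_{V^{(j)}}$ and the state-preparation oracle $\hat U_{\lambda_{n_0}}$ appear a constant number of times. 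Inserting $N\in\mathcal{O}(\log(D/\epsilon)/\eta)$ turns the evolution count into $\tilde{\mathcal{O}}(D/\eta)$ per circuit, with $\mathcal{O}(1)$ block-encoding and state-preparation calls; the $\text{QFT}^\dagger$ on each register adds only a subleading number of gates that do not touch the oracles. This establishes the per-circuit figures used in both protocols.

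The two protocols then differ only in how the amplitude $R(\vec\omega)/\Omega$ is read out. For protocol 1 I would invoke amplitude estimation \cite{qae} on $\bra{0}\hat U_{\vec\omega}\hat U_{R(\vec\omega)}\ket{0}=R(\vec\omega)/\Omega$: estimating this amplitude to additive error $\epsilon/\Omega$ needs $\mathcal{O}(\Omega/\epsilon)$ coherent repetitions of $\hat U_{R(\vec\omega)}$, and multiplying by the per-circuit cost yields $\tilde{\mathcal{O}}(D\Omega/(\eta\epsilon))$ evolution queries and $\mathcal{O}(\Omega/\epsilon)$ block-encoding/state-preparation queries. For protocol 2 I would observe that a bare measurement of the circuit returns the outcome $\vec\omega$, with the block-encoding ancillas landing in $\ket{0}$, as a Bernoulli variable of success probability $|R(\vec\omega)/\Omega|^2$. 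Estimating this probability to additive accuracy $\epsilon^2/\Omega^2$ by Chernoff/Hoeffding costs $\mathcal{O}(\Omega^4/\epsilon^4)$ samples; multiplying out recovers $|R(\vec\omega)|^2$ to accuracy $\epsilon^2$, and since $\big||R|-|R_{\mathrm{est}}|\big|^2\le \big||R|^2-|R_{\mathrm{est}}|^2\big|$ for nonnegative quantities, this gives $|R(\vec\omega)|$ to accuracy $\epsilon$. Because each sample simultaneously updates the Bernoulli estimate of every $\vec\omega$, all $N^D$ points are resolved at once, the union bound over them contributing only a $\log(N^D)=\tilde{\mathcal{O}}(D)$ factor absorbed into $\tilde{\mathcal{O}}$; combining with the per-circuit cost gives the stated $\tilde{\mathcal{O}}(D\Omega^4/(\eta\epsilon^4))$ and $\mathcal{O}(\Omega^4/\epsilon^4)$ totals.

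I expect the main obstacle to be the error budgeting rather than the gate counting: one must verify that the window-truncation error controlled by \cref{eq:window_requirement} and the statistical/amplitude-estimation error combine to total accuracy $\epsilon$ without forcing a tighter $N$ or more samples, i.e. that splitting the budget as $\mathcal{O}(\epsilon/D)$ per frequency component (hence the $\log(D/\epsilon)$ in $N$) is simultaneously consistent with the $\mathcal{O}(\Omega/\epsilon)$ and $\mathcal{O}(\Omega^4/\epsilon^4)$ readout overheads. The secondary subtlety is the nonlinear $\sqrt{\cdot}$ step in protocol 2, which is exactly what degrades the naive $1/\epsilon^2$ sampling cost to $1/\epsilon^4$.
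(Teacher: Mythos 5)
Your proposal follows essentially the same route as the paper's proof: both rest on the encoding identity $\hat U_{R(\vec\omega)}\ket{0,0}=\Omega^{-1}\sum_{\vec\omega}R(\vec\omega)\ket{\vec\omega,0}+\ket{\perp}$ from \cref{app:proof}, track the $1/\Omega$ subnormalization into an $\epsilon\rightarrow\epsilon/\Omega$ accuracy adjustment, and then apply amplitude estimation for protocol 1 and direct Bernoulli sampling of $P(\vec\omega,0)=\Omega^{-2}|R(\vec\omega)|^2$ for protocol 2. The extra details you supply (the binary decomposition of the controlled evolutions, the union bound over frequency points, and the explicit $|R|^2\rightarrow|R|$ error conversion) are refinements of steps the paper asserts more briefly, not a different argument.
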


\begin{proof}
This proof hinges on the action of the $\hat U_{R(\vec\omega)}$ circuit shown in \cref{fig:corr_func}. As shown in \cref{app:proof}, up to some constants related to the binary representation of the frequencies in $\vec\omega$, this unitary acts on the all-zeros computational basis state as $\hat U_{R(\vec\omega)} \ket{0,0} = \Omega^{-1} \sum_{\vec\omega} R\left(\vec\omega\right) \ket{\vec\omega,0} + \ket{\perp}$, where $\ket{\perp}$ corresponds to the system register being in a state that is different from $\ket{0}$ and we have included the $\Omega$ factor appearing from the block-encodings. We now show how the complexities of the estimation protocols correspond to those stated in \cref{theo:freq_comp}.
\begin{enumerate}
    \item We start by writing the target response function as the expectation value $R(\vec\omega) = \Omega \bra{0}\hat U_{\vec\omega} \hat U_{R(\vec\omega)}\ket{0}$. Determining the expectation value to accuracy $\epsilon$ can be optimally done through the use of amplitude estimation routines \cite{qae}, having that each circuit requires $\mathcal{O}(1/\epsilon)$ calls to the target unitary and $\mathcal{O}(1)$ samples. Since $\Omega$ appears multiplying the expectation value, the accuracy needs to be adjusted as $\epsilon\rightarrow\epsilon/\Omega$ as to determine $R(\vec\omega)$ to accuracy $\epsilon$. Noting that the maximum evolution time required for resolving peaks with a width of $\eta$ is $\tilde{\mathcal{O}}(1/\eta)$ for each of the $D$ time/frequency registers, we thus get a circuit complexity of $\tilde{\mathcal{O}}(D\cdot \Omega/\eta\cdot\epsilon)$ calls to the time-evolution $e^{\pm i\hat H}$, and $\mathcal{O}(\Omega/\epsilon)$ queries to initial state preparation oracle and each of the block-encoding unitaries.
    \item We start by noting that the probability of measuring the output state $\ket{\vec\omega,0}$ corresponds to $P(\vec\omega,0) = \Omega^{-2} |R(\vec\omega)|^2$. If $|R(\vec\omega)|$ is to be determined to accuracy $\epsilon$, we then need to recover $P(\vec\omega,0)$ to accuracy $\epsilon^2/\Omega^2$, which requires $\mathcal{O}(\Omega^4/\epsilon^4)$ circuit repetitions. Each $\hat U_{R(\vec\omega)}$ will require $\tilde{\mathcal{O}}(D/\eta)$ calls to the time-evolution oracles, alongside $\mathcal{O}(1)$ calls to each of the block-encoding and $\hat U_{\lambda_n}$ unitaries. Note that this procedure will then determine $|R(\vec\omega)|$ to the target accuracy $\epsilon$. Recovering $R(\vec\omega)$ can then be done as shown in \cref{app:proof} without increasing the complexity.
\end{enumerate}
\end{proof}
We note that in general, the block-encoding of an operator might not always be successful due to the non-unitarity of the block-encoded operator and the fact that the block-encoded operator will be divided by the 1-norm of the coefficients in its decomposition into a linear combination of unitaries. However, we do not need to pay a penalty for failures due to the first reason as they are part of the quantity being estimated, that is the non-unitarity of the perturbations is also part of $R(\vec\omega)$. This failure probability is already considered by the fact that we recover the target response up to the $\Omega$ factor. Thus, having a failed application of any of the block-encodings would reflect having a distribution such that its norm $\mathcal{N}= \Omega^{-2} \sum_{\vec\omega} |R(\vec\omega)|^2$ is smaller than $1$. To make this point clearer, in the case where $R(\vec\omega)$ is trivially $0$ over all $\vec\omega$'s, we would have a block-encoding failure over each run of the circuit, each one containing information of the associated $\mathcal{N}=0$.

\subsection{Complete square simplification}
\begin{figure}
    \centering
    \[
\Qcircuit @C=1.5em @R=1em {
     \lstick{\ket{0}} & \gate{U_{\mathcal{L}}} &  \qw & \qw &\gate{} \qwx[1]  & \qw &\qw &\gate{\text{QFT}^\dag} & \meter & \rstick{\omega_1} \\
     \lstick{\ket{0}} & \gate{U_{\mathcal{L}}} & \qw &\qw &\gate{} \qwx[2]  & \qw &\qw& \gate{\text{QFT}^\dag} & \meter & \rstick{\omega_2} \\
        & \vdots                      & & &     &  & & \vdots                    &     \\
     \lstick{\ket{0}} & \gate{U_{\mathcal{L}}} & \qw &\qw & \gate{} \qwx[1] & \qw &\qw&\gate{\text{QFT}^\dag} & \meter & \rstick{\omega_D} \\
     \lstick{\ket{0}} & \gate{U_{\lambda_{n_0}}}  & \qw & \qw& \gate{\hat V_I^{(0)}(t_D)\hat V_I^{(1)}(t_{D-1})\cdots \hat V_I^{(D-1)}(t_1) \hat V^{(D)}} & \qw &\qw & \qw & \qw
}
\]
    \caption{Quantum circuit for \gls{GQPE} sampling frequencies from the response function with complete-square structure shown in \cref{eq:complete_square}. Note that this is associated to a response with order $2D+1$.}
    \label{fig:complete_square}
\end{figure}

We now introduce an algorithmic simplification for the case where the quantity of interest can be written as a complete square, namely 
\begin{equation} \label{eq:complete_square}
    R^{(2)}(\vec\omega) = \left|\left| \sum_{n_1,\cdots,n_D} \prod_{j=0}^D \mathcal{L}(\Delta_{(j-1)j} -\omega_j) V^{(j)}_{n_{j-1}n_j} \ket{\lambda_{n_D}}\right|\right|^2.
\end{equation}
Examples of such processes include linear absorption and emission, alongside resonant Raman scattering within the dipole approximation \cite{mukamel}. Note that such cases require that at least one of the perturbations is also the observable of interest, while all other perturbations act twice. The simplified circuit is shown in \cref{fig:complete_square}. This corresponds to a response of order $2D+1$, while only $D$ time/frequency registers are required. The probability of measuring $\vec\omega$ in the ancilla registers corresponds to $P(\vec\omega) = \Omega^{-2}R^{(2)}(\vec\omega)$. Note that the $\Omega=\prod_{j=0}^D|\hat V^{(j)}|_1$ defined here runs over $D+1$ elements, which makes it quadratically smaller than the one with $2D+2$ elements that would be used in \cref{theo:freq_comp}. When compared with the more general circuit in \cref{fig:corr_func}, this complete square approach then reduces the number of block-encoding applications from $2D+2$ to $D+1$ and the associated $\Omega$ by a square-root factor.

\begin{theorem}[Complete square complexity] \label{theo:comp_square}
Given a system's time-evolution operator $e^{i\hat H}$ and set of unitaries $\{\hat U_{V^{(j)}}\}_{j=0}^D$ that block-encode operators $\hat V^{(j)}/|\hat V^{(j)}|_1$, we can estimate the (complete square) response $R^{(2)}(\vec\omega)$ with order $2D+1$ from \cref{eq:complete_square} with energy resolution $\eta\in(0,1)$ up to accuracy $\epsilon\in(0,1)$ by either of the following:
\begin{enumerate}
    \item At a particular $\vec\omega$ using $\tilde{\mathcal{O}}\left(\frac{D\cdot \Omega^2}{\eta\cdot\epsilon}\right)$ queries to $e^{\pm i\hat H}$, plus $\mathcal{O}\left(\frac{\Omega^2}{\epsilon}\right)$ queries to each block-encoding unitary in $\{\hat U_{V^{(j)}}\}_{j=0}^D$ as well as $\hat U_{\lambda_n}$ for a total of $\mathcal{O}(1)$ samples. \\ \\
    In total, this procedure requires $\tilde{\mathcal{O}}\left(\frac{D\cdot \Omega^2}{\eta\cdot\epsilon} \right)$ queries to the time-evolution oracles and $\mathcal{O}\left(\frac{\Omega^2}{\epsilon}\right)$ queries to the block-encoding and initial state preparation oracles.
    \item At every point $\vec\omega$ simultaneously. Each circuit uses $\tilde{\mathcal{O}}\left(\frac{D}{\eta}\right)$ queries to $e^{\pm i\hat H}$, plus $\mathcal{O}(1)$ queries to each block-encoding unitary as well as $\hat U_{\lambda_n}$. The associated sample complexity is $\mathcal{O}\left(\frac{\Omega^4}{\epsilon^2}\right)$. \\ \\
    In total, this procedure requires $\tilde{\mathcal{O}}\left(\frac{D\cdot \Omega^4}{\eta\cdot\epsilon^2} \right)$ queries to the time-evolution oracles and $\mathcal{O}\left(\frac{\Omega^4}{\epsilon^2}\right)$ queries to the block-encoding and initial state preparation oracles.
\end{enumerate}

\end{theorem}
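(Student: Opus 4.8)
The plan is to mirror the two-protocol structure of the proof of \cref{theo:freq_comp}, the only structural change being that the circuit of \cref{fig:complete_square} never un-prepares the system register. The first and central step is to pin down the action of this simplified circuit and show that measuring the $D$ time registers returns the outcome $\vec\omega$ with probability $P(\vec\omega)=\Omega^{-2}R^{(2)}(\vec\omega)$. I would establish this exactly as in \cref{app:proof}: preparing each time register in $\sum_k\alpha_k\ket{k}$, applying the multiplexed interaction-picture operator $\hat V^{(0)}_I(t_D)\cdots\hat V^{(D)}$, and then applying $\text{QFT}^\dagger$ turns the window coefficients into the line-shape factors $\mathcal{L}(\Delta_{(j-1)j}-\omega_j)$, so that --- conditioned on the time registers reading $\vec\omega$ and all block-encoding ancillas reading $\ket{0}$ --- the system register is left in the unnormalized state $\Omega^{-1}\sum_{n_1,\dots,n_D}\prod_{j=0}^{D}\mathcal{L}(\Delta_{(j-1)j}-\omega_j)V^{(j)}_{n_{j-1}n_j}\ket{\lambda_{n_D}}$. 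Since we never contract this register against $\bra{\lambda_{n_0}}$ --- the step that, in \cref{theo:freq_comp}, closes the Liouville loop into a single amplitude --- the probability of the $\vec\omega$ outcome is simply the squared norm of this state, which by \cref{eq:complete_square} equals $\Omega^{-2}R^{(2)}(\vec\omega)$. Here $\Omega=\prod_{j=0}^D|\hat V^{(j)}|_1$ runs over only the $D+1$ block-encodings and the square arises from the norm; this is precisely the square-root reduction over using \cref{theo:freq_comp} to compute the same order-$2D+1$ response with $2D+2$ block-encodings.

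Given this identity, protocol 1 follows by amplitude estimation \cite{qae} of the good-outcome probability $P(\vec\omega)$ at the fixed $\vec\omega$, taking the good subspace to be time-registers $=\vec\omega$ together with all block ancillas in $\ket{0}$. Amplitude estimation returns $P$ to additive accuracy $\delta$ with $\onot{1/\delta}$ coherent repetitions of the circuit; since $R^{(2)}(\vec\omega)=\Omega^2 P(\vec\omega)$, resolving it to accuracy $\epsilon$ demands $\delta=\epsilon/\Omega^2$ and hence $\onot{\Omega^2/\epsilon}$ repetitions. Each repetition calls the block-encodings and state preparation $\onot{1}$ times and, using the window-length bound $N\in\tildenot{1/\eta}$ from \cref{eq:window_requirement} across the $D$ registers, makes $\tildenot{D/\eta}$ calls to $e^{\pm i\hat H}$. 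Multiplying yields the claimed $\onot{\Omega^2/\epsilon}$ block-encoding queries and $\tildenot{D\cdot\Omega^2/(\eta\cdot\epsilon)}$ time-evolution queries.

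For protocol 2 I would instead sample \cref{fig:complete_square} directly, recording each measured $\vec\omega$ as one Bernoulli success for its own bin. The empirical frequencies then estimate $P(\vec\omega)$ to additive accuracy $\delta$ with $\onot{1/\delta^2}$ samples by a Chernoff/Hoeffding bound, uniformly in $\vec\omega$; setting $\delta=\epsilon/\Omega^2$ to resolve $R^{(2)}=\Omega^2 P$ to accuracy $\epsilon$ gives the sample complexity $\onot{\Omega^4/\epsilon^2}$, each circuit again costing $\tildenot{D/\eta}$ time-evolution and $\onot{1}$ block-encoding calls. The key qualitative point I would stress is the improved $\epsilon$-scaling relative to protocol 2 of \cref{theo:freq_comp}: because $R^{(2)}$ is already the squared-norm quantity, it is estimated directly to accuracy $\epsilon$ rather than through a squared amplitude that must be resolved to accuracy $\epsilon^2$, which removes two powers of $\epsilon$ and accounts for the improvement from $\epsilon^{-4}$ to $\epsilon^{-2}$.

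The main obstacle lies entirely in the first step. The delicate part is making the identity $P(\vec\omega)=\Omega^{-2}R^{(2)}(\vec\omega)$ precise: tracking that leaving the system register unmeasured reproduces exactly the squared-norm structure of \cref{eq:complete_square}, and verifying that the failure branches of the $D+1$ block-encodings are correctly absorbed into the $\Omega^{-2}$ normalization, so that --- as in the discussion following \cref{theo:freq_comp} --- a block-encoding failure is counted as a null outcome rather than as estimation error. Once this probability identity is in hand, the amplitude-estimation and concentration arguments of the remaining steps are routine and parallel those of \cref{theo:freq_comp}.
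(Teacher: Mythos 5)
Your proposal is correct and follows essentially the same route as the paper's proof: establish that the pre-measurement state of the circuit in \cref{fig:complete_square} is $\Omega^{-1}\sum_{\vec\omega}\sum_{n_1,\cdots,n_D}\prod_{j}\mathcal{L}(\Delta_{(j-1)j}-\omega_j)V^{(j)}_{n_{j-1}n_j}\ket{\vec\omega,\lambda_{n_D}}$ via the \cref{app:proof} argument, read off $P(\vec\omega)=\Omega^{-2}R^{(2)}(\vec\omega)$ from the unmeasured system register, then apply amplitude estimation with target accuracy $\epsilon/\Omega^2$ for protocol~1 and direct sampling with a Hoeffding bound for protocol~2. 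Your accuracy target $\epsilon/\Omega^2$ in protocol~2 is in fact the one consistent with the stated $\mathcal{O}(\Omega^4/\epsilon^2)$ sample count, and your closing remark on why the $\epsilon$-scaling improves relative to \cref{theo:freq_comp} is a correct reading of the mechanism.
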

\begin{proof}
From the proof of the action of $\hat U_{R(\vec\omega)}$ in \cref{app:proof}, it follows immediately that the circuit shown in \cref{fig:complete_square} before measurement yields the wavefunction $\Omega^{-1}\sum_{\vec\omega}\sum_{n_1,\cdots,n_D}\prod_{j=0}^D \mathcal{L}(\Delta_{(j-1)j} - \omega_j) V_{n_{j-1}n_j}^{(j)} \ket{\vec\omega,\lambda_{n_D}}$. Measurement of $\vec\omega$ in the ancilla registers will have an associated probability of $P(\vec\omega) = \Omega^{-2}R^{(2)}(\vec\omega)$. By determining $P(\vec\omega)$ with accuracy $\epsilon/\Omega^2$ we can then recover $R^{(2)}(\vec\omega)$ with accuracy $\epsilon$. We now show how this can be used to recover the complexities stated in \cref{theo:comp_square}. 
\begin{enumerate}
    \item Determination of the probability $P(\vec\omega)$ to accuracy $\epsilon/\Omega^2$ can be done optimally through the use of amplitude estimation routines \cite{qae}, each circuit needing $\mathcal{O}(\Omega^2/\epsilon)$ calls to the associated unitary, namely the circuit shown in \cref{fig:complete_square} before measurement, and $\mathcal{O}(1)$ samples.  Considering that each unitary requires $\tilde{\mathcal{O}}(D/\eta)$ queries to $e^{\pm i\hat H}$ and $\mathcal{O}(1)$ queries to each block-encoding and $\hat U_{\lambda_n}$, we recover a total cost of $\tilde{\mathcal{O}}(D\cdot \Omega^2/\eta\cdot\epsilon)$ queries to $e^{\pm i\hat H}$, $\mathcal{O}(\Omega^2/\epsilon)$ queries to each block-encoding unitary and $\hat U_{\lambda_n}$, and $\mathcal{O}(1)$ samples.
    \item Recovering $P(\vec\omega)$ to accuracy $\epsilon/\Omega$ can be done by directly sampling the circuit in \cref{fig:complete_square} a total of $\mathcal{O}(\Omega^4/\epsilon^2)$ times. Each circuit requires $\tilde{\mathcal{O}}(D/\eta)$ calls to $e^{\pm i\hat H}$, alongside $\mathcal{O}(1)$ calls to each of the block-encodings and $\hat U_{\lambda_n}$ unitaries.
\end{enumerate}
\end{proof}

\subsection{Application to nonlinear spectroscopy}

A straightforward application of the \gls{GQPE} framework presented above is the calculation of response quantities presented in \cref{sec:response}
\begin{equation} \label{eq:gen_path}
    R(\vec\omega) = \left(\prod_{j=1}^D \mathcal{L}(\omega_j)\right) *^D C(\vec\omega) .
\end{equation}
The central idea is to apply the convolution theorem, obtaining the $D$-dimensional Fourier transform
\begin{equation} 
    R(\vec\omega) = \mathcal{F}^D_{\vec t \rightarrow \vec\omega}\left\{\left(\prod_{j=1}^D \mathcal{F}^{-1}_{\omega_j\rightarrow t_j}\left\{\mathcal{L}(\omega_j)\right\}\right) \cdot C(\vec t)\right\}, \label{eq:td_resp}
\end{equation}
and write the line shape as a scaled and truncated Fourier series $\mathcal{L}(\omega) = \frac{1}{\sqrt{N}}\sum_{k=0}^{N-1}\alpha_k e^{ik\omega}$ with $\| \vec{\alpha}\|^2 = 1$. One would then choose the multi-variate operator for phase estimation to be the associated correlation function $C(\vec t)$ (e.g. $\hat V e^{i\hat H  t_1} \hat V e^{i\hat H ( t_2+ t_3)}\hat A e^{-i\hat H  t_3} \hat V e^{-i\hat H ( t_1+ t_2)}$ from \cref{eq:to_be_referenced}), and run the \gls{GQPE} protocol.\\

However, note that in many cases the broadening has a physical meaning and appears as a result of dissipative environmental effects. Hence, in such cases one might not even require the line shape to satisfy the condition from \cref{eq:window_requirement}. Finally, while in the above we assumed the input is a pure eigenstate $\ket{\lambda_{n_0}}$, the algorithm can be applied to an equilibrium mixed state $\hat\rho_0$ via either a Stinespring dilation-based preparation of $\hat\rho_0$, or by sampling the preparation unitary $\hat U_{\lambda_{n_0}}$ for each run with probability $\tr{\bra{\lambda_{n_0}}\hat\rho_0 \ket{\lambda_{n_0}}}$. Refs.~\cite{thermal1, thermal2} show how thermal states can be prepared on quantum hardware. \\

As an example application, we now show how the scattering amplitude from a resonant Raman process can be obtained using the \gls{GQPE} algorithm. We start by noting that even though this quantity is coming from a third-order response, there are only two frequencies that appear throughout the resonant process. The target quantity can be written as a complete square, meaning we can directly use the complete square protocol outlined above to recover it from a \gls{GQPE} circuit that uses only two time/frequency registers. We start by writing the target quantity in \cref{eq:raman} in its most commonly used form using the Kramers-Heinseberg formula and showcasing its complete square structure:
\begin{align}
    \chi^{(3)}_{\mu\mu}(\omega_I,\omega_S-\omega_I,\omega_I) &= \sum_{n_2} \delta(\Delta_{20} - \omega_I+\omega_S) \left|\sum_{n_1} \mu_{n_2n_1}\mu_{n_1n_0} \mathcal{L}_{\eta_{int}}(\Delta_{10} - \omega_I)\right|^2 \\
    &\approx \left|\left|\sum_{n_1,n_2} \mu_{n_2n_1}\mu_{n_1n_0} \mathcal{L}_{\delta}(\Delta_{20}-\omega_I+\omega_S)\mathcal{L}_{\eta_{int}}(\Delta_{10}-\omega_I)\ket{\lambda_{n_2}}\right|\right|^2,
\end{align}
where we defined the Lorentzian line shape $\mathcal{L}_{\eta_{int}}(\omega) \propto (\omega + i\eta_{int})^{-1}$ alongside the $\mathcal{L}_\delta(\omega)$ line shape such that its square approximates the Dirac delta distribution. Approximating the Dirac delta can be done most efficiently using a Kaiser window. The different line shapes reflect the fact that the highly excited virtual states appearing in this process have an environmental relaxation rate that is significantly faster than that of the lower energy final states. Considering the output of a \gls{GQPE} circuit for estimating a complete square response seen in \cref{fig:complete_square}, namely $R^{(2)}(\omega_1,\omega_2)$, we recover the target quantity as
\begin{equation} \label{eq:gqpe_to_raman}
    \chi^{(3)}_{\mu\mu}(\omega_I,\omega_S-\omega_I,\omega_I) = R^{(2)}(\omega_S+\omega_I+E_{n_0}, \omega_I+E_{n_0}).
\end{equation}

This shows how we can use a \gls{GQPE} circuit to simulate a nonlinear spectroscopy experiment, in this case corresponding to a resonant Raman scattering process. Note that the circuit can be modified by adding a multiplexed application with $e^{i\hat H(t_1+t_2)}$ before the first block-encoding of $\hat\mu$ to include the $E_{n_0}$ factors. However, if this energy is known it can be simply included as a shift as shown in \cref{eq:gqpe_to_raman}, reducing the overall time-evolution implementation cost by a factor of two. In addition, the system register can be prepared from the start with a normalized state corresponding to $\hat\mu\ket{\lambda_{n_0}}$, which would remove one of the block-encodings. This effectively rescales $\Omega$ by a square-root factor, which for complexity purposes changes the $\Omega^4$ factors appearing in \cref{theo:comp_square} to $\Omega^2$. \\
Finally, note that implementation of the block-encoding for the dipole operator $\hat\mu$ will depend on how the particular system of interest is being represented. For a concrete example we consider a resonant Raman X-ray scattering event, where the photon scattering is caused by electron-photon interactions. In this case, the associated dipole entering the \gls{GQPE} algorithm corresponds to the electronic dipole operator, alongside the electronic structure Hamiltonian. The dipole here is a one-electron operator, which means that it can be efficiently diagonalized using orbital rotations and implemented using a linear combination of unitaries approach with an optimal 1-norm \cite{LCU4}.

\begin{figure}
    \centering
    \[
\Qcircuit @C=2.5em @R=1em {
    &\\
    &\lstick{\ket{0}} & \gate{U_{\mathcal{L}_{\delta}}} & \qw & \qw & \qw & \gate{} \qwx[2] & \gate{\textrm{QFT}^\dagger} & \meter & \rstick{\omega_1}  \\
    &\lstick{\ket{0}} & \gate{U_{\mathcal{L}_{\eta_{int}}}} & \qw & \gate{}\qwx[1] & \qw & \qw & \gate{\textrm{QFT}^\dagger} & \meter  & \rstick{\omega_2}  \\
     &\lstick{\ket{0}} & \gate{U_{\lambda_{n_0}}} & \gate{\mu} & \gate{e^{-i\hat H t_2}} & \gate{\mu} & \gate{e^{-i\hat H t_1}} & \qw & \qw
}
\]
    \caption{Circuit for simulating resonant Raman spectroscopy. Note that the $\mathcal{L}_\delta$ line shape should be chosen so that its square approximates a Dirac delta as quickly as possible, choosing e.g. a Kaiser window or a Gaussian line shape for it (see \cref{app:lineshapes}). Binary encodings of the frequencies $\omega_1$ and $\omega_2$ are sampled with the probability distribution $|R(\omega_1,\omega_2)|^2$ from which the resonant Raman cross section can be obtained, as discussed in \cref{eq:gqpe_to_raman}.}
    \label{fig:raman}
\end{figure}

\subsection{Single-ancilla algorithm}\label{sec:MC}

\begin{figure}[t!]
\centering
\[
\Qcircuit @C=2.5em @R=1em {
    &\\
     &\lstick{\ket{0}} & \gate{H} & \ctrl{1} & \gate{W} & \gate{H} & \meter & \rstick{p(0) - p(1) = \Omega^{-1}\mathbf{Re}/\mathbf{Im}\left[C(\vec t) \right]} \\
     &\lstick{\ket{0}} & \gate{U_{\lambda_{n_0}}} & \gate{C(\vec t)} & \qw & \qw & \qw
}
\]
\caption{Hadamard-based circuit for Monte Carlo single-ancilla approach. The gate $W$ corresponds to $I/S^\dagger$ for obtaining the real/imaginary component. The circuit implementing $\hat C(\vec t)$ is the same as that appearing in \cref{fig:corr_func}, where instead of having a multiplexing over the time registers we have a controlled application over a particular set of times $\vec t$.}
\label{fig:hadamard}
\end{figure}
We now propose a more hardware-friendly implementation of \gls{GQPE} which generalizes the work by Lin and Tong \cite{lin_and_tong} for \gls{QPE}. This replaces the additional time/frequency registers required by \gls{GQPE} with a single ancilla qubit. As shown in \cref{eq:gen_path}, the computed quantity by \gls{GQPE} consists of a convolution between line shape functions and a correlation function, which through the convolution theorem becomes Eq.~\eqref{eq:td_resp}. We start by noting that the frequency-dependent correlation function $C(\vec\omega)$ will be $0$ whenever some $|\omega_j| > \pi$. We can thus consider all associated frequencies to go from $\omega_j\in[-\pi,\pi)$ as opposed to $(-\infty,\infty)$. A key assumption of a localized line shape is made at this point, which can be translated as $\mathcal{L}(\omega) \approx 0 \ \forall |\omega|>\pi$. This allows us to approximate the line shape by a periodic function in the interval $[-\pi,\pi)$. The function $\mathcal{L}(\omega)$ can then be expressed as a Fourier series instead of a Fourier integral:
\begin{equation} \mathcal{L}(\omega) \approx \sum_{k=-\infty}^\infty l(k) e^{-i k\omega}. \end{equation} 
Here, $l(t) = \ift{\mathcal{L}(\omega)}$. The information coming from the line shape is used to avoid having to obtain the correlation function with a high accuracy in exponentially many time points. We now define a probability distribution for $\vec k$ as
\begin{equation} \label{eq:time_dist}
    P(\vec k) = \frac{1}{P_{tot}} \prod_{i=1}^D l(k_i),
\end{equation}
where $P_{tot}$ is a normalizing constant such that $\sum_{k_1=-\infty}^\infty \cdots \sum_{k_D=-\infty}^\infty P(\vec k)=1$. Consider the random variables $x(\vec t)$ and $y(\vec t)$ such that $\esp{x(\vec t)} = \Omega^{-1}\re{C(\vec t)}$ and $\esp{y(\vec t)} = \Omega^{-1}\im{C(\vec t)}$, which can be realized by the Hadamard test circuit shown in Fig.~\ref{fig:hadamard} by assigning a value of $1/-1$ for a measurement of the ancilla qubit in $0/1$ and a successful block-encoding of all $\hat V^{(j)}$'s. Here $\Omega=\prod_{j=0}^D |\hat V^{(j)}|_1$ is the product of 1-norms. The full procedure for estimating $R(\vec\omega)$ then becomes:
\begin{enumerate}
    \item Define a target accuracy $\epsilon$. Deduce the required number of samples $M_{\epsilon}\sim\onot{\Omega^2(\epsilon\eta^D)^{-2}}$. Set $m=1$. 
    \item Sample $\vec k^{(m)}$ from the distribution in Eq.~\eqref{eq:time_dist}.
    \item Run the Hadamard test circuit for the 1-norm scaled correlation function $\Omega^{-1} C(\vec k^{(m)})$ shown in Fig.~\ref{fig:hadamard}. Record the resulting random variable as $x_m$. A measurement of $0/1$ in the ancilla qubit corresponds to a value of $1/-1$. Repeat the procedure for the imaginary part, recording the result in the random variable $y_m$
    \item Repeat steps $2$ and $3$ a total of $M_\epsilon$ times.
    \item Obtain the $\epsilon$-approximation to the target quantity as:
\begin{align}
    R_\epsilon(\vec\omega) &= \frac{\Omega P_{tot}}{M_\epsilon} \sum_{m=1}^{M_\epsilon} (x_m+iy_m) e^{-i \vec\omega \cdot \vec k^{(m)}}. \label{eq:fin_mc}
\end{align}
\end{enumerate}
A more detailed deduction of the computational complexity of this procedure is presented in \cref{app:complexities}.  On average, each run will require $\onot{D/\eta}$ calls to the $e^{\pm i \hat H}$ time-evolution oracle.\\

As mentioned when obtaining the response coming from all possible paths in Liouville space, paths always appear in pairs with their complex conjugates, having an interference for which only the real/imaginary part survives for $D$ odd/even. For a path $\alpha$ and its associated conjugate path $\bar\alpha$, we can then write the sum of contributions to the response function as
\begin{align}
    R^{(\alpha)}(\vec\omega) + R^{(\bar\alpha)}(\vec\omega) &= \mathcal{F}^D_{\vec t \rightarrow \vec\omega} \left\{\jj{C^{(\alpha)}(\vec t)} \prod_{i=1}^D l(t_i) \right\},
\end{align}
where we have defined $\jj{\cdot} = \re{\cdot}/i\im{\cdot}$ for $D$ odd/even. Contributions from these two paths can then be obtained by only considering the real/imaginary components that are obtained by the Hadamard test.

\glsreset{GQPE}
\section{Discussion} \label{sec:discussion}
We have presented the \gls{GQPE} framework, which is an extension of \gls{QPE} for multi-variable phase estimation of interaction picture operators. By viewing \gls{QPE} as the Fourier transform of the expectation $\bra{\psi}e^{i\hat H t}\ket{\psi}$, \gls{GQPE} is a generalization to a multi-variate Fourier transform of multi-variate expectations, such as higher-order correlation functions $\bra{\psi}\hat A_I(t_1)\hat B_I(t_2)\hat C_I(t_3)\hat D \ket{\psi}$. As an application we show how \gls{GQPE} enables the computation of arbitrary order response properties such as those used to model nonlinear spectroscopy experiments. Furthermore, \gls{GQPE}'s flexibility to use different window functions allows for incorporation of different spectral broadenings that appear as a result of dissipative environments.\\

The cost for accurate computation of response properties quickly becomes prohibitive for classical computers, particularly for nonlinear processes. This is due to the fact that calculations of response properties requires access to either excited states or the time-evolution operator both of which become intractable as the system size grows. This elevates the significance of the \gls{GQPE} framework to overcome the limitations of classical techniques in computing response quantities, with circuit costs scaling linearly with respect to the response order. \\

Finally, we presented a more hardware-friendly modification of our framework using the ideas proposed by Lin and Tong \cite{lin_and_tong} for lowering the hardware requirements of \gls{QPE} \cite{gsee}. Thus, we could avoid the need to have a large number of qubits encoding times/frequencies for implementations on early fault-tolerant quantum computers. One open question is how to adapt the \gls{GQPE} framework presented here to work with the qubitized walk operator as opposed to time-evolution operator, analogous to the walk-based implementation of \gls{QPE} \cite{walk_based_qpe, qrom}. Another interesting avenue that is left as future work is a full exploration of more complex dissipative effects by replacing time-evolution with a corresponding Liouvillian evolution.\\

\bibliography{biblio}

\newpage
\appendix
\glsresetall

\section{Spectral broadening and lineshapes} \label{app:lineshapes}
In this appendix we give a more in-depth discussion of different lineshapes and how they are related to different physical processes. Our starting point will be the expression for a path response function
\begin{equation}
    R_{AV}^{(D,\alpha)}(\vec\omega) = \left(\prod_{j=1}^D \Theta(\omega_j) \mathcal{L}(\omega_j)  \right)*^D C^{(\alpha)}(\vec\omega).
\end{equation}
Here we have explicitly separated the lineshape into two components: 
\begin{align}
    \Theta(\omega) &\equiv \ft{\theta(t)} \\
    \mathcal{L}(\omega) &\equiv \ft{e^{-\gamma(t)}}.
\end{align}
The Heaviside $\Theta$ component enforces causality, while the $\mathcal{L}$ part has the information from the dissipative environment and causes an energy broadening.

\subsection{Causality and the Kramers-Kronig relations}
We now discuss the effect from the causality component. Obtaining the Fourier transform of the Heaviside function can be done through the following analytical continuation:
\begin{align}
    \Theta(\omega) &= \lim_{\epsilon\rightarrow 0} \ft{e^{-\epsilon |t|} \theta(t)} \\
    &= \lim_{\epsilon\rightarrow 0} \frac{-i}{\omega - i\epsilon} \\
    &= \frac{1}{2\pi}\left(\delta(\omega) - i\mathcal{P}\frac{1}{\omega}\right),
\end{align}
where $\mathcal{P}$ corresponds to the Cauchy principal value: this function is singular at $\omega=0$, and special care needs to be taken for it to be well defined at this point. The path response functions then consist of a convolution with $\Theta(\omega_k)$ over each frequency $\omega_k$. This effectively enforces the Kramers-Kronig relation \cite{mukamel, boyd, kk1, kk2} in each path response for every $k=1,...,D$:
\begin{align}
    R^{(D,\alpha)}_{AV}(\omega_1,\cdots,\omega_k,\cdots,\omega_D) &= -\frac{i}{\pi}\mathcal{P}\int_{-\infty}^\infty d\tilde\omega_k \frac{R^{(D,\alpha)}_{AV}(\omega_1,\cdots,\tilde\omega_k,\cdots,\omega_D)}{\tilde\omega_k-\omega_k}. \label{eq:kk_1}
\end{align}
Separating the real and imaginary parts of the equation and yields another form for this relation:
\begin{align}    
    \re{R_{AV}^{(D,\alpha)}(\omega_1,\cdots,\omega_k,\cdots,\omega_D)} &= \frac{1}{\pi}\mathcal{P}\int_{-\infty}^\infty d\tilde\omega_k \frac{R^{(D,\alpha)}_{AV}(\omega_1,\cdots,\tilde\omega_k,\cdots,\omega_D)}{\tilde\omega_k-\omega_k}  \label{eq:kk_2} \\
    \im{R_{AV}^{(D,\alpha)}(\omega_1,\cdots,\omega_k,\cdots,\omega_D)} &= -\frac{1}{\pi}\mathcal{P}\int_{-\infty}^\infty d\tilde\omega_k \frac{R^{(D,\alpha)}_{AV}(\omega_1,\cdots,\tilde\omega_k,\cdots,\omega_D)}{\tilde\omega_k-\omega_k}. \label{eq:kk_3} 
\end{align}
Note that these relationships are also directly applicable to the total response $\chi_{AV}^{(D)}(\vec\omega)$, which is how they are usually expressed.

\subsection{Dissipation and spectral broadening}
We now discuss how dissipative effects affect the response function. A formal inclusion of dissipative effects requires an open-system treatment, which entails replacing the unitary time evolution under the Hamiltonian with a non-unitary Liouvillian evolution of a density matrix. Several works have been done to implement these non-unitary evolutions on quantum computers \cite{open1, open2, open3, open4, open5, open6}, which could be used to introduce general environmental effects into our response framework. However, the inclusion of arbitrary environmental effects is beyond the scope of this work. \\

A large class of dissipation effects can be added in an \textit{ad hoc} way by modifying the time evolution operator $e^{\pm i\hat H t}\rightarrow e^{\pm i\hat H t}e^{-\hat\Gamma(t)}$. Here, $\hat\Gamma(t)$ is some operator which commutes with $\hat H$ for every time $t$, which we write as $\hat\Gamma(t) = \sum_n \gamma_n(t) \ket{\lambda_n}\bra{\lambda_n}$. This effectively enforces some dissipation rate $\gamma_n(t)$ at time $t$ for each eigenstate $\ket{\lambda_n}$. An example of where this type of dissipation can appear is for solvated molecules, where the $\gamma_n$'s will be related to the spectral density of the solvent \cite{franco_deco}. However, taking into account the full operator $\hat\Gamma(t)$ would also require non-unitary dynamics. We will then make a uniform broadening approximation: we assume the dissipation is the same for all states. This allows us to replace $\hat\Gamma(t)$ by a scalar $\gamma(t)$. Note that some of the energy-dependency of $\hat\Gamma(t)$ can be reintroduced by using different $\gamma(t)$ functions over different parts of the process. This is particularly useful for resonant processes where intermediary states are typically localized in energy, as discussed for resonant Raman spectroscopy in Eq.\eqref{eq:raman}. \\

Table~\ref{tab:lineshapes} summarizes different lineshapes which are commonly encountered and their physical significance. Finally, most lineshapes discussed here, except for the Heaviside window, converge to a Dirac delta in the frequency-domain when the proper limit for their defining parameter is considered.
\begin{table}[]
    \centering
    \begin{tabular}{|c|c|c|p{4cm}|}
        \hline
        Name of lineshape & Time-domain window & Frequency-domain $\mathcal{L}(\omega)$ & Physical significance \\ \hline \hline
        Causality window & $\theta(t) \equiv \begin{cases}
            0,\ t < 0 \\
            1,\ t \geq 0
        \end{cases}$ & $\Theta(\omega) = \frac{1}{2\pi}\left(\delta(\omega) - i\mathcal{P}\frac{1}{\omega} \right)$ & Heaviside function, enforces causality in response function and causes Kramers-Kronig relations [Eqs.~(\ref{eq:kk_1}-\ref{eq:kk_3})]. \\ \hline
        Rectangular window & $\gamma_T(t) = \begin{cases}
            0,\ |t|\leq T/2 \\
            \infty,\ |t|> T/2
        \end{cases}$ & $\mathcal{L}_T(\omega) = T\textrm{sinc}\left(\frac{\omega T}{2}\right)$ & This window often appears when truncating the infinite-time Fourier integral to some finite time. Ubiquitous in QPE. \\ \hline
        Lorentzian window & $\gamma_\eta(t) = \eta|t|$ & $ \mathcal{L}_\eta(\omega) = \frac{1}{\pi}\frac{1}{i\omega+\eta}$ & Simple dissipation with a linear rate. Often appears in spectroscopy. \\ \hline
        Gaussian window & $\gamma_\sigma(t) = \frac{t^2}{\sigma^2}$ & $\mathcal{L}_\sigma(\omega) = \frac{\sigma}{2}e^{-\frac{\sigma^2\omega^2}{4}}$ & Exponentially decaying tails make it an attractive choice for fast convergence and small evolution time cutoffs. When combined with Lorentzian window it yields the Voigt lineshape, which is often encountered in spectroscopy. \\ \hline
        Kaiser window & $e^{-\gamma_{\alpha,L}(t)} = \begin{cases}
        \frac{1}{L} \frac{I_0\left(\pi\alpha\sqrt{1-\left(\frac{2t}{L}\right)}\right)}{I_0(\pi\alpha)},\textrm{ if }|t|\leq \frac{L}{2}, \\
        0,\textrm{ otherwise}.
    \end{cases} $ & $\mathcal{L}_{\alpha,L}(\omega) = \frac{\textrm{sinc}\left(\pi\sqrt{L^2\omega^2 - \alpha^2}\right)}{I_0(\pi\alpha)}$ & Close to optimal frequency concentration around the main $\omega=0$ peak, this kernel achieves the fastest convergence for QPE \cite{vs_qsvt}. $I_0$ corresponds to the zeroth-order modified Bessel function of the first kind. \\ \hline
    \end{tabular}
    \caption{Different lineshapes commonly encountered when calculating response properties.}
    \label{tab:lineshapes}
\end{table}

\section{Proof of action of circuits} \label{app:proof}
\begin{figure}[t!]
\centering
\[
\Qcircuit @C=1.5em @R=1em {
     & \gate{H}  & \ctrlo{1} &  \ctrl{1} & \gate{H} & \qw \\
     & \qw &\multigate{2}{\hat U_{R(\vec\omega)}} & \multigate{2}{\hat U_{R(\vec\omega)}^*} & \qw & \qw\\
     & \vdots & & & & &\\
     & \qw & \ghost{\hat U_{R(\vec\omega)}} & \ghost{\hat U_{R(\vec\omega)}^*}& \qw& \qw
}
\]
\caption{Quantum circuit to separate the real $\frac{\hat U_{R(\vec\omega)} + \hat U_{R(\vec\omega)}^*}{2}$ and the imaginary $\frac{\hat U_{R(\vec\omega)} - \hat U_{R(\vec\omega)}^*}{2}$ parts of $R(\vec{\omega})$ so they can be estimated.}
\label{fig:real_complex_circ}
\end{figure}
In this appendix we show a proof of how the expectation values of the operators introduced in \cref{sec:GQPE} indeed yield discretized approximations of the sought after response functions. This corresponds to the equalities:
\begin{align}
    R(\omega_1,...,\omega_D) &= \bra{\omega_1,...,\omega_D, 0}\hat U_{R(\vec\omega)}\ket{0,\cdots,0, 0},\\
    R(\omega_1,...,\omega_D)^* &= \bra{\omega_1,...,\omega_D, 0}\hat U_{R(\vec\omega)}^*\ket{0,\cdots,0, 0}.
\end{align}
Given
\begin{align}
    \hat U_{R(\vec\omega)} &= \left[\bigotimes_{j=1}^D\text{QFT}^\dag \otimes \hat U_{\lambda_{n_0}}^\dag \right] \left[\sum_{t_1,\cdots t_D} \bigotimes_{j=1}^D \ket{t_j}\bra{t_j} \otimes \hat V^{(0)}_I(t_D) \hat V^{(1)}_I(t_{D-1}) \cdots \hat V^{(D-1)}_I(t_1) \hat V^{(D)}\,\,\right]\left[\bigotimes_{j=1}^D\hat U_{\mathcal{L}} \otimes \hat U_{\lambda_{n_0}}\right],\\
    \hat U_{R(\vec\omega)}^* &= \left[\bigotimes_{j=1}^D\text{QFT}\,\, \otimes \hat U_{\lambda_{n_0}}^\dag\right] \left[\sum_{t_1,\cdots t_D} \bigotimes_{j=1}^D \ket{t_j}\bra{t_j} \otimes \left(\hat V^{(0)}_I(t_D) \hat V^{(1)}_I(t_{D-1}) \cdots \hat V^{(D-1)}_I(t_1) \hat V^{(D)}\right)^\dag\right]\left[\bigotimes_{j=1}^D\hat U_{\mathcal{L}^*}  \otimes \hat U_{\lambda_{n_0}} \right].
\end{align}
Notice $\hat U_{R(\vec\omega)}^* \neq \hat U_{R(\vec\omega)}^\dagger$. Here we have used $\hat U_{\mathcal{L}_{\eta}} \ket{0} = \sum_{k=0}^{N-1} \alpha_k \ket{k}$, and $\hat U_{\mathcal{L}_{\eta}^*} \ket{0} = \sum_{k=0}^{N-1} \alpha_k^* \ket{k}$ as the unitaries that prepare the initial superposition of the time register and $\hat U_{\lambda_n}\ket{0} = \ket{\lambda_n}$ as the unitary that prepares the initial state of the system. We refer the reader to~\cite{ini_state,dynamical_cooling} for a more in-depth discussion on initial state preparation.\\
We write the unitary
\begin{align}
    \hat U_{R(\vec\omega)}\ket{0, 0} &= \left[\bigotimes_{j=1}^D\text{QFT}^\dag \otimes \hat U_{\lambda_{n_0}}^\dag \right] \left[\sum_{t1,\cdots, t_D} \bigotimes_{k=1}^D \ket{t_k}\bra{t_k} \otimes \hat V^{(0)}_I(t_D) \hat V^{(1)}_I(t_{D-1}) \cdots \hat V^{(D-1)}_I(t_1) \hat V^{(D)}\,\,\right]\bigotimes_{j=1}^D\sum_{t_j=0}^{N-1}\alpha_{t_j}\ket{t_j} \otimes \ket{\lambda_{n_0}}\nonumber\\
    &= \left[\bigotimes_{j=1}^D\text{QFT}^\dag \otimes \hat U_{\lambda_{n_0}}^\dag \right] \sum_{t_1,\cdots t_D} \bigotimes_{k=1}^d\alpha_{t_k}\ket{t_k} \otimes \hat V^{(0)}_I(t_D) \hat V^{(1)}_I(t_{D-1}) \cdots \hat V^{(D-1)}_I(t_1) \hat V^{(D)}\ket{\lambda_{n_0}}.
\end{align}
Therefore
\begin{align}
    \bra{\Vec \omega, 0}\hat U_{R(\vec\omega)}\ket{0, 0} &= \bra{\Vec \omega}\left[\bigotimes_{j=1}^D\text{QFT}^\dag\right] \sum_{t_1,\cdots, t_D} \bigotimes_{k=1}^D\alpha_{t_k}\ket{t_k} \cdot \bra{\lambda_{n_0}}\hat V^{(0)}_I(t_D) \hat V^{(1)}_I(t_{D-1}) \cdots \hat V^{(D-1)}_I(t_1) \hat V^{(D)}\ket{\lambda_{n_0}}\nonumber\\
    &= \bra{\Vec \omega}\left[\bigotimes_{j=1}^D\text{QFT}^\dag\right] \sum_{t_1,\cdots, t_D} \bigotimes_{k=1}^D\alpha_{t_k}\ket{t_k} \sum_{n_1,...,n_D} V^{(D)}_{n_Dn_0}\prod_{j=1}^{D} V^{(j-1)}_{n_{j-1}n_{j}}\,e^{i\Delta_{(j-1)j}\cdot t_{D-j+1}}\nonumber\\
    &= \sum_{t_1,\cdots, t_D} \prod_{k=1}^D \frac{\alpha_{t_k}}{\sqrt{N}}\,e^{-it_k\cdot \omega_k \cdot 2\pi/N} \sum_{n_1,...,n_D} V^{(D)}_{n_Dn_0}\prod_{j=1}^{D} V^{(j-1)}_{n_{j-1}n_{j}}\,e^{i\Delta_{(j-1)j}\cdot t_{D-j+1}}\nonumber\\
    &= \sum_{n_1,...,n_D} V^{(D)}_{n_Dn_0}\prod_{j=1}^{D} V^{(j-1)}_{n_{j-1}n_{j}} \Biggl[\sum_{t_1=0}^{N-1} \frac{\alpha_{t_1}}{\sqrt{N}}\,e^{i(\Delta_{(D-1)D} - \frac{\omega_1 \cdot 2\pi}{N})t_1} \sum_{t_2=0}^{N-1} \frac{\alpha_{t_2}}{\sqrt{N}}\,e^{i(\Delta_{(D-2)(D-1)} - \frac{\omega_2 \cdot 2\pi}{N})t_2}\nonumber\\
    &\quad\cdots\sum_{t_{D-1}=0}^{N-1} \frac{\alpha_{t_{D-1}}}{\sqrt{N}}\,e^{i(\Delta_{12} - \frac{\omega_{D-1} \cdot 2\pi}{N})t_{D-1}}\sum_{t_D=0}^{N-1} \frac{\alpha_{t_D}}{\sqrt{N}}\,e^{i(\Delta_{01} - \frac{\omega_D \cdot 2\pi}{N})t_D}\Biggr]
\end{align}
Since $\mathcal{L}_{\eta}(\omega) = \frac{1}{\sqrt{N}}\sum_{k=0}^{N-1}\alpha_k e^{ik\omega}$, then
\begin{align}
    \bra{\Vec\omega, 0}\hat U_{R(\vec\omega)}\ket{0, 0} &= \sum_{n_1,...,n_D} V^{(D)}_{n_Dn_0}\prod_{j=1}^{D} V^{(j-1)}_{n_{j-1}n_{j}}\, \mathcal{L}_{\eta}\left(\Delta_{(j-1)j}-\frac{2\pi\cdot\omega_{D-j+1}}{N}\right).
\end{align}
The equality for $\hat U_{R(\vec\omega)}^*$ can be shown via a similar analysis. From these expectation values, it becomes clear how using the standard linear combination of unitaries circuit in Fig.~\ref{fig:real_complex_circ} we can recover the real and imaginary components of the response function. Note that this approach yields the absolute values for these quantities. One possible approach for recovering the signs would be to implement the same circuits as in \cref{fig:real_complex_circ} with a global shift as shown in Ref.~\citenum{real_qae}.

\section{Complexity of single-ancilla algorithm} \label{app:complexities}
Here, we discuss the cost for the Monte Carlo single-ancilla method. We start by considering the $D=1$ one-dimensional case, after which we extend our deduction to the general multi-dimensional case. \\

We first calculate the expectation value of $k$ which is chosen according the probability $P(k)/P_{tot}$ in \cref{eq:time_dist}. For simplicity we consider a Lorentzian broadening with width $\eta$, noting that other broadenings will give a very similar width-dependent complexity \cite{gsee}. We thus have $P(k) = \theta(k) e^{-\eta k}$:
\begin{equation}
    \mathbb{E}(k) = \frac{1}{P_{tot}^{(1)}}\sum_{k=-\infty}^\infty P(k) k = \frac{1}{P_{tot}^{(1)}}\sum_{k=-\infty}^\infty \theta(k)e^{-\eta k} k.
\end{equation}
The scalings for the two elements of this expectation are:
\begin{equation}
\begin{aligned}
    P_{tot}^{(1)} = \sum_{k=0}^\infty e^{-\eta k} &= \frac{1}{1- e^{- \eta}} \sim \mathcal{O}\left(\frac{1}{\eta}\right), \\
    \sum_{k=0}^\infty e^{-\eta k} k &=  -\frac{d}{d\eta} \left( \sum_{k=0}^\infty e^{-\eta k}  \right) = \frac{e^{-\eta}}{(1-e^{-\eta})^2} \sim \mathcal{O}\left(\frac{1}{\eta^2}\right).
\end{aligned}    
\end{equation}
This results in:
\begin{equation}
    \mathbb{E}(k) \sim \mathcal{O}\left( \frac{1}{\eta} \right).
\end{equation}
On the other hand, if we take the outcomes of each single measurement, since they will be $\pm 1$ values, their variance can be calculated as follows:
\begin{equation}
    \text{var} ( x_m+iy_m ) \leq \mathbb{E} (|x_m|^2+|y_m|^2) = 2
\end{equation}
The variance in the whole random variable which results in the response function shown in \cref{eq:fin_mc} is given by $(\Omega P^{(1)}_{tot})^2 \text{var}( x_m+iy_m )$. The total number of repetitions in order to obtain an error $\epsilon$ thus becomes
\begin{equation}
    M_\epsilon^{(1)} \sim \mathcal{O} \left( \frac{(\Omega P_{tot}^{(1)})^2}{\epsilon^2} \right) =
     \mathcal{O} \left( \frac{\Omega^2}{\eta^2 \cdot \epsilon^2 } \right),
\end{equation}
where $\Omega\equiv \prod_{j=0}^D |\hat V^{(j)}|_1$ encodes the 1-norms coming from the block-encodings of all $\hat V^{(j)}$'s. In order to find the total number of queries to the time evolution operator $e^{-iH}$, we multiply the above number by $\mathbb{E}(k)$, which results in:
\begin{equation}
    \mathcal{O}\left( \frac{\Omega^2}{ \eta^3 \cdot\epsilon^2 } \right)
\end{equation}

We are now ready to deduce the general cost for the multi-dimensional case, having an arbitrary order $D$. We first note that the average number of queries in each round follows:
\begin{equation}
    \mathbb{E}\left[ \sum_{i=1}^D k_i \right] = \sum_{i=1}^D \mathbb{E}\left[ k_i \right] \sim \mathcal{O}\left( \frac{D}{\eta} \right).
\end{equation}
Furthermore, the total number of times that the measurement needs to be repeated can be calculated as follows:
\begin{equation}
    M_\epsilon^{(D)} \sim \mathcal{O}\left(\frac{\Omega^2}{\eta^{2D}\cdot \epsilon^2} \right)
\end{equation}
where we have used that the variance of the random variable scales as $(\Omega P_{tot}^{(D)})^2 = \Omega^2 (P_{tot}^{(1)})^{2D}$. The total number of queries to $e^{-iH}$, which is obtained by multiplying by $\mathbb{E}\left[ \sum_{i=1}^D k_i \right]$, then becomes
\begin{equation}
    \mathcal{O}\left( \frac{D\cdot\Omega^2}{\eta^{2D+1} \cdot \epsilon^2} \right).
\end{equation}
Finally, note that all of our previous analysis works with the scaled Hamiltonian such that $||\hat H||\leq \pi$. For a general Hamiltonian, we can define $\tau \sim \onot{||\hat H||^{-1}}$ such that the Hamiltonian that enters the algorithm is $\hat H \tau$. Noting that the response has units $||\hat H||^{-1}$, obtaining an accuracy $\epsilon$ for the scaled Hamiltonian requires an accuracy $\epsilon/\tau$ for the general case. The scaled spectral width $\eta$ becomes $\eta\tau$. The total number of calls to $e^{-i\hat H\tau}$ for estimating the response function of a general Hamiltonian using the single-ancilla approach thus becomes
\begin{equation}
    \onot{\frac{D\cdot\Omega^2 }{\eta^{2D+1}\cdot\tau^{2D-1}\cdot\epsilon^2}}.
\end{equation}
Note that the same $\epsilon\rightarrow \epsilon/\tau$ and $\eta\rightarrow\eta\tau$ scalings should be considered when dealing with a general non-scaled Hamiltonian for the complexities shown for the \gls{GQPE}-based algorithms in \cref{sec:GQPE}.

\end{document}